\newcommand{\remove}[1]{}
\newtheorem{theorem}{Theorem}[section]
\newtheorem{lemma}[theorem]{Lemma}
\newcommand{\xor}{\oplus}
\providecommand{\cX}{\mathcal{X}}
\providecommand{\cY}{\mathcal{Y}}
\DeclareMathOperator{\rk}{rk}
\providecommand{\defeq}{\vcentcolon=}
\renewcommand{\defeq}{\vcentcolon=}
\title{The Unbounded-Error Communication Complexity of symmetric XOR functions}
\author{
Hamed Hatami
\thanks{Supported by an NSERC grant.}
\\
School of Computer Science\\
McGill University, Montreal\\
\texttt{hatami@cs.mcgill.ca}
\and
Yingjie Qian
\\
Department of Mathematics and Statistics\\
McGill University, Montreal\\
\texttt{yingjie.qian@mail.mcgill.ca}
}
\begin{document}

\maketitle
\begin{abstract}
Settling a conjecture of Shi and Zhang~\cite{MR2553112}, we determine the unbounded-error communication complexity of the symmetric XOR functions up to a poly-logarithmic factor. Our proof is by a simple reduction to an earlier result of Sherstov regarding  the symmetric AND functions. 

\end{abstract}

\section{Introduction}

The \emph{unbounded-error} model, defined by Paturi and Simon~\cite{MR864082}, is a deep and elegant communication model with several applications to circuit complexity and learning theory. The setting is the same as Yao's standard two-party communication model~\cite{Yao:1979} with the players having access only to \emph{private} random coins. The goal of the players is to agree on an output that is better than a completely random guess. More formally, given a communication problem $f: \cX \times \cY \to \{-1,1\}$, Alice and Bob receive $x\in \cX$ and $y\in \cY$ respectively. They communicate according to an agreed upon communication protocol $\pi$ that has access to private randomness, and their goal is to achieve $\Pr[\pi(x,y)=f(x,y)]>\frac{1}{2}$ for all $(x,y) \in \cX \times \cY$. The cost of $\pi$ is the worst-case number of bits exchanged on any input $(x,y)$, and the unbounded-error communication complexity of $f$, denoted by $U(f)$, is the least cost of such a protocol.

The unbounded-error model is one of the most powerful communication models as it is stronger than any of the usual deterministic, nondeterministic, randomized, or quantum communication models. As a result, obtaining lower bounds for  unbounded-error communication complexity is very desirable as such results  also imply lower bounds for other notions of communication complexity. However, often the unbounded-error complexity is exponentially smaller than the complexity of the function in these other models.

A beautiful fact about unbounded-error communication complexity, discovered by  Paturi and Simon~\cite{MR864082}, is that $U(f)$ has an elegant characterization in terms of a matrix quantity called the sign-rank. More precisely $U(f)=\log_{2} \rk_\pm(f) + O(1)$ where the sign-rank of $f$, denoted by $\rk_{\pm}(f)$, is the smallest rank of an $\cX \times \cY$ matrix whose every entry is a non-zero real with the same sign as $f$. Despite this elegant characterization, due to its inherent power,  no nontrivial lower bounds were known for any explicit function  in this model until the breakthrough work of Forster~\cite{MR1964645}, who proved a strong lower bound for the inner product function, and  more generally, for any function whose communication matrix has low spectral norm.  Forster's deep inequality has been since the main tool in establishing lower bounds in this model, and all the major subsequent works (e.g.~\cite{MR2886100,MR2592035}) have been essentially based on variations of Forster's inequality and combining it with other techniques such as Sherstov's pattern matrix method~\cite{MR2582888}.

The focus of this note is  on symmetric XOR functions, i.e., functions $f^\xor:\{0,1\}^n\times\{0,1\}^n\to\{-1,1\}$ of the form $f^\xor(x,y)=D(\sum_{i=1}^n x_i \xor y_i)$, where $D:\{0,1,\dots,n\}\to\{-1,1\}$ is a given predicate, and $x_i \xor y_i$ stands for the \emph{exclusive or} of the two bits $x_i$ and $y_i$. In~\cite{MR2553112} Shi and Zhang characterized the bounded-error randomized and quantum communication complexities of such functions up to poly-logarithmic factors. They observed that the bounded-error randomized and quantum case can be reduced to the case of the symmetric AND function $f^\wedge(x,y)=D(\sum_{i=1}^n x_iy_i)$, a problem which was solved earlier in an important paper of Razborov~\cite{MR1957920}.

The problem of determining the unbounded-error communication complexity of the functions of the form $f^\wedge(x,y)$ has been resolved as well. Indeed  Sherstov combined ideas from Razborov's paper~\cite{MR1957920} with his  pattern matrix method~\cite{MR2582888}, and a generalization of Forster's theorem~\cite{MR1964645} to prove that the randomized communication complexity of $f^\wedge(x,y)$ is essentially equal to the number of sign changes in $D$, i.e. $|\{i : D(i) \neq D(i + 1)\}|$. Shi and Zhang~\cite{MR2553112} conjectured that similarly the unbounded-error complexity of $f^\xor=D(\sum_{i=1}^n x_i \xor y_i)$ is essentially equal to $|\{i : D(i) \neq D(i + 2)\}|$. However, they speculated that Sherstov's approach cannot  be applied to this problem. 

Recently Chattopadhyay and Mande~\cite{CN17} made partial progress towards resolving this conjecture using a direct approach based on Fourier analysis.  In this note we settle the conjecture by showing that, contrary to the belief of Shi and Zhang,  it can also be deduced from Sherstov's result via a simple reduction.\footnote{After distributing a preliminary version of this paper, we learned that Anil Ada, Omar Fawzi, and Raghav Kulkarni were preparing an article~\cite{AOR17}  in which they have independently proved this conjecture as well as several other  results. Indeed the first author has  learned about the connection between the symmetric AND functions and the symmetric XOR functions from a correspondence with Omar Fawzi a few years ago  regarding a different problem.}

\subsection{Notation}
For a natural number $n$, we use $[n]$ to denote the set $\{1,\ldots,n\}$. Given a predicate $D:\{0,1,\dots,n\}\to\{-1,1\}$, the functions $f^\xor_D, f^\wedge_D:\{0,1\}^n\times\{0,1\}^n\to\{-1,1\}$ are defined, respectively, as $f_D^\xor(x,y)=D(\sum_{i=1}^n x_i \xor y_i)$, and  $f_D^\wedge(x,y)=D(\sum_{i=1}^n x_iy_i)$.

We denote the \emph{Hamming weight} of a vector $z \in \{0,1\}^n$ by $|z|= \sum_{i=1}^n z_i$. Let $x \xor y$ and $x \wedge y$ respectively denote the bitwise XOR and the bitwise AND of the two vectors $x,y \in \{0,1\}^n$. In this notation, we have $f^\xor_D=D(|x \xor y|)$ and $f^\wedge_D(x,y)=D(|x \wedge y|)$. We also write $f|_{r,t}$ if the inputs $x,y$ are restricted to satisfy $|x|=r$ and $|y|=t$.

The degree $\deg(D)$ of a given predicate $D:\{0,1\ldots,n\} \to\{-1,1\}$ is the number of times $D$ changes value in $\{0,1\ldots,n\}$. In other words, $\deg(D)=|\{i : D(i) \neq D(i + 1)\}|$. It is not difficult to show that $\deg(D)$ is the least degree of a real univariate polynomial $p$ such that $p(i)$ has the same sign as $D(i)$ for every $i$, whence the name. We similarly define $\deg_2(D)= |\{i : D(i) \neq D(i + 2)\}|$.

\section{Main result}

In this section we state and prove our main result, establishing the conjecture of Shi and Zhang~\cite{MR2553112}.

\begin{theorem}[Main Result]
\label{thm:main}
Let $D:\{0,1,\dots,n\}\to\{-1,1\}$ be a given predicate, and let $M=\deg_2(D)$. We have
$$\Theta\left(\frac{M}{\log^5(n)}\right) \le U(f^{\xor}_D) \le \Theta(M \log n).$$
\end{theorem}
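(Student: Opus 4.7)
The plan rests on a single algebraic identity: for $x, y \in \{0,1\}^n$ with $|x| = r$ and $|y| = t$ one has $|x \xor y| = r + t - 2|x \wedge y|$. Thus restricting $f^\xor_D$ to the Hamming slice $|x|=r,\,|y|=t$ turns it into the AND function on the same slice, namely $f^\wedge_{D'}|_{r,t}$, where $D'(k) \defeq D(r+t-2k)$. The sign changes of $D'$ are exactly those indices $i$ with $D(i) \neq D(i+2)$ whose parity matches that of $r+t$. Since $M = \deg_2(D)$ counts sign changes of both parities combined, by pigeonhole one parity accounts for at least $M/2$ of them; choosing $r,t \in \{\lfloor n/2\rfloor, \lceil n/2\rceil\}$ with $r+t$ of that parity yields $\deg(D') = \Omega(M)$ while the slice itself has maximum possible size.

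The upper bound is then essentially immediate. Alice and Bob first exchange the Hamming weights $r=|x|$ and $t=|y|$ using $O(\log n)$ deterministic bits. Once both know $r, t$, they agree on the one-dimensional predicate $D'$ and invoke Sherstov's unbounded-error protocol for $f^\wedge_{D'}$ on their original inputs. By Sherstov's upper bound for symmetric AND functions this part costs $O(\deg(D') \log n) \le O(M \log n)$ bits, giving the overall bound $O(M \log n)$.

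For the lower bound we use that sign-rank is monotone under taking submatrices, together with the Paturi--Simon equivalence $U(f) = \log_2 \rk_\pm(f) + O(1)$. Taking $r, t$ as above, the slice identity gives $\rk_\pm(f^\xor_D) \ge \rk_\pm(f^\xor_D|_{r,t}) = \rk_\pm(f^\wedge_{D'}|_{r,t})$. Sherstov's lower bound then yields $\rk_\pm(f^\wedge_{D'}|_{r,t}) \ge 2^{\Omega(\deg(D') / \log^5 n)} = 2^{\Omega(M / \log^5 n)}$, whence $U(f^\xor_D) \ge \Omega(M/\log^5 n)$.

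The one subtle point, and the step I expect to need the most care, is that we are invoking Sherstov's lower bound as a bound on the sign-rank of the balanced Hamming-weight submatrix of $f^\wedge_{D'}$, not merely of the full communication matrix. This is precisely what Sherstov's argument delivers, since the pattern matrix method combined with the generalized Forster inequality inherently lower-bounds the sign-rank of a specific structured submatrix that embeds into any balanced Hamming slice of the AND matrix. Once this is granted, the remainder of the argument is a one-line submatrix identification via the identity $|x\xor y|=r+t-2|x\wedge y|$; everything else (the pigeonhole choice of $r,t$ and the appeal to Paturi--Simon) is essentially bookkeeping.
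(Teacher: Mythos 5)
Your reduction rests on the same central identity $|x\oplus y| = r+t-2|x\wedge y|$ as the paper, but the execution of the lower bound has a genuine gap in exactly the place you flag as ``the one subtle point.'' You claim that Sherstov's theorem lower-bounds the sign-rank of the balanced Hamming slice $f^\wedge_{D'}|_{r,t}$, asserting that ``the pattern matrix method combined with the generalized Forster inequality inherently lower-bounds the sign-rank of a specific structured submatrix that embeds into any balanced Hamming slice of the AND matrix.'' Theorem~\ref{thm:sherstov} as stated bounds $U(f^\wedge_G)$ for the full communication matrix on $\{0,1\}^m\times\{0,1\}^m$; a bound on a single Hamming slice is a strictly stronger statement, and justifying it would require re-opening Sherstov's proof and verifying that his pattern-matrix construction actually sits inside that slice. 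You offer no such verification, and it is not obvious: the pattern matrix is parameterized over (set, function) pairs and is not given as a fixed-Hamming-weight submatrix of $f^\wedge_{D'}$.

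The paper closes this gap differently, and more simply: instead of restricting $f^\xor_D$ to a slice and appealing to a slice-strengthened Sherstov, it builds an explicit embedding of the \emph{full} communication matrix of a smaller AND problem $f^\wedge_G$ on $\{0,1\}^{q/2}\times\{0,1\}^{q/2}$ into $f^\xor_D$. Given $x',y'\in\{0,1\}^{q/2}$, Alice and Bob each pad to a vector in $\{0,1\}^n$ of prescribed weight ($r$ and $t$ respectively), with the padding 1's arranged to overlap in exactly a fixed number $k$ of coordinates. This forces $|x\wedge y| = k+|x'\wedge y'|$, hence $|x\oplus y| = r+t-2(k+|x'\wedge y'|)$, and the feasibility of the padding is what dictates the specific choices $q=n/32$, $r=(n-q)/2$, etc.\ (so that $k+q/2\le r$ and $r+t-k\le n-q/2$). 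Since the \emph{entire} matrix of $f^\wedge_G$ is embedded as a submatrix of $f^\xor_D$, Theorem~\ref{thm:sherstov} applies verbatim and no slice-version is needed. Your pigeonhole observation about parities is in the right spirit but needs to be combined with this padding step, together with the reversal lemma (Lemma~\ref{lem:flip}) to put the dense sign changes in a usable range, to actually produce a predicate $G$ with $\deg(G)=\Omega(M)$ on a domain small enough that the padding is feasible.

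Your upper bound argument (exchange Hamming weights in $O(\log n)$ deterministic bits, then run Sherstov's AND protocol for the induced predicate $D'$, which has $\deg(D')=O(M)$) is correct and is a legitimate alternative to the paper's approach; the paper instead constructs an explicit low-degree sign-representing polynomial $p$ and bounds the number of nonzero Fourier coefficients of $p(x\oplus y)$ to get $\rk_\pm(f^\xor_D)\le 4n^M$ directly. Both give $O(M\log n)$; the paper's route has the advantage of not relying on composing unbounded-error protocols, while yours is arguably more transparent once one accepts that composition step.
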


The upper bound is easy and was probably known to~\cite{MR2553112}. We prove the lower bound by reducing it to the following result of  Sherstov~\cite{MR2886100} on unbounded-error communication complexity of the symmetric AND functions.

\begin{theorem}[{\cite{MR2886100}}]
\label{thm:sherstov}
Let $D:\{0,1,\dots,n\}\to\{-1,1\}$ be a given predicate and $K=\deg(D)$. We have,
$$\Theta\Big(\frac{K}{\log^5(n)}\Big) \le U(f^{\wedge}_D) \le \Theta(K \log n).$$
\end{theorem}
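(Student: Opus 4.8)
\textbf{Proof proposal for Theorem~\ref{thm:sherstov}.}

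The plan is to follow Sherstov's approach in~\cite{MR2886100}, which upgrades Razborov's weight analysis~\cite{MR1957920} for the symmetric AND function to a sign-rank statement via the pattern matrix method~\cite{MR2582888} and a generalization of Forster's inequality~\cite{MR1964645}. The upper bound is the easy direction: since $\deg(D)=K$, there is a univariate real polynomial $p$ of degree $K$ with $\operatorname{sign}(p(i))=D(i)$ for all $i\in\{0,\dots,n\}$; writing $p(|x\wedge y|)$ out, each monomial $(x_{i_1}\wedge y_{i_1})\cdots(x_{i_j}\wedge y_{i_j})$ factors as a product of a function of $x$ and a function of $y$, so the corresponding communication matrix has rank at most $\sum_{j=0}^{K}\binom{n}{j}=n^{O(K)}$, giving $\rk_{\pm}(f^{\wedge}_D)\le n^{O(K)}$ and hence $U(f^{\wedge}_D)=O(K\log n)$ by the Paturi--Simon characterization.

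For the lower bound I would proceed in three steps. First, reduce to a ``threshold-free'' core: if $K=\deg(D)$, then within $\{0,1,\dots,n\}$ the predicate $D$ has $K$ sign changes, and a standard argument (restricting coordinates and/or symmetrizing) lets us assume the sign changes occur in a window $[\ell,\ell+\Theta(K)]$ that is bounded away from the endpoints by a constant fraction, at the cost of only constant factors. Second, invoke Razborov's dual-polynomial / weight argument: because $D$ has $K$ sign changes in a balanced window, there is a probability distribution $\mu$ on $\{0,\dots,K\}$ (or on pairs of rows/columns) against which every degree-$o(K)$ polynomial correlates negligibly with $D$, equivalently the approximate degree of $D$ on that window is $\Omega(K)$ (for $\sqrt{K}$-type error). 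Third, lift this to sign-rank via the pattern matrix method: form the pattern matrix $A$ whose entries are indexed by restrictions and whose sign pattern encodes $f^{\wedge}_D$ on a suitably padded instance; the dual object produced in step two certifies that every low-rank sign-representing matrix must have large spectral-norm-to-entry-size ratio, and then Sherstov's generalization of Forster's theorem converts this into $\log\rk_{\pm}(A)=\Omega(K/\log^{c} n)$. Tracking the polylogarithmic losses in the pattern-matrix padding and in the Forster bound gives the stated $\Omega(K/\log^5 n)$, and $U(f^{\wedge}_D)=\log_2\rk_{\pm}(f^{\wedge}_D)+O(1)$ finishes the lower bound.

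The main obstacle is the third step: sign-rank is notoriously resistant to the linear-programming duality that controls ordinary approximate degree, so one cannot simply plug Razborov's dual polynomial into Forster's inequality. Sherstov's insight is that the pattern matrix construction ``smears'' the hard instance across many restrictions so that the dual distribution becomes, after an appropriate change of variables, a bona fide spectral certificate; making this precise requires careful bookkeeping of how the approximate-degree threshold $\Omega(K)$ degrades when one passes from correlation bounds to spectral bounds (this is where the factor $\log^5 n$ arises), and verifying that the symmetry of $f^{\wedge}_D$ is compatible with the pattern-matrix padding. I would therefore spend most of the effort isolating the precise form of Razborov's dual object needed, then citing or re-deriving the Forster-type conversion lemma from~\cite{MR2886100}, rather than on the (routine) reductions in steps one and two.
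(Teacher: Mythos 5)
The paper does not prove Theorem~\ref{thm:sherstov}; it quotes it from Sherstov~\cite{MR2886100} and uses it as a black box in the reduction, so there is no in-paper argument to compare your proposal against. Your sketch is a fair high-level account of the argument in~\cite{MR2886100}: a degree-$K$ sign-representing univariate polynomial gives $\rk_{\pm}(f^{\wedge}_D)\le n^{O(K)}$ and hence the $O(K\log n)$ upper bound via Paturi--Simon, while the lower bound comes from lifting a Razborov-style orthogonalizing dual through the pattern-matrix construction and invoking a generalization of Forster's inequality. You correctly single out the passage from approximate-degree duality to sign-rank as the hard step; worth making explicit is that this requires a \emph{smooth} orthogonalizing distribution (one that is orthogonal to low-degree polynomials \emph{and} pointwise bounded below), since a generic Razborov dual does not satisfy the boundedness needed for the generalized Forster bound to apply to the pattern matrix, and constructing such a smooth dual from the sign-change structure of $D$ carries much of the technical weight in~\cite{MR2886100}. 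Since the present paper treats Theorem~\ref{thm:sherstov} as imported, none of that machinery is reproduced or needed here, and your proposal should be read as a summary of the cited source rather than as an alternative to anything in this paper.
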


\subsection{Proof of Theorem~\ref{thm:main}, lower bound}
We start with a simple observation that allows us to ``reverse'' the predicate  if necessary.

\begin{lemma}
\label{lem:flip}
Consider $D:\{0,1,\dots,n\}\to\{-1,1\}$, and let the reverse of $D$ be the function $\overleftarrow{D}:\{0,1,\dots,n\}\to\{-1,1\}$ defined as $\overleftarrow{D}(i)=D(n-i)$ for all $i$. We have
$$U(f^{\xor}_{\overleftarrow{D}})=U(f^{\xor}_D).$$
\end{lemma}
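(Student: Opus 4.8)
The plan is to exhibit a cost-preserving correspondence between protocols for $f^{\xor}_D$ and $f^{\xor}_{\overleftarrow{D}}$, built on the one-line observation that complementing exactly one of the two inputs reverses the Hamming weight of the XOR. Concretely, writing $\overline{x}=\mathbf{1}\xor x$ for the bitwise complement (with $\mathbf{1}=(1,\dots,1)$), we have $\overline{x}\xor y=\mathbf{1}\xor(x\xor y)$, so $|\overline{x}\xor y|=n-|x\xor y|$, and therefore
$$f^{\xor}_D(\overline{x},y)=D\big(n-|x\xor y|\big)=\overleftarrow{D}\big(|x\xor y|\big)=f^{\xor}_{\overleftarrow{D}}(x,y).$$

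Given this identity, the reduction is immediate. Starting from any unbounded-error protocol $\pi$ for $f^{\xor}_D$ of cost $c$, have Alice locally replace her input $x$ by $\overline{x}$ and then run $\pi$, with Bob unchanged; no extra bits are exchanged, so the new protocol also has cost $c$, and by the identity above it computes $f^{\xor}_{\overleftarrow{D}}$ with exactly the same success probability on every input. This yields $U(f^{\xor}_{\overleftarrow{D}})\le U(f^{\xor}_D)$, and since $\overleftarrow{\overleftarrow{D}}=D$ the same argument applied to $\overleftarrow{D}$ gives the opposite inequality, hence equality. Alternatively, one can pass through the Paturi--Simon characterization $U(f)=\log_2\rk_\pm(f)+O(1)$: the identity says the communication matrix of $f^{\xor}_{\overleftarrow{D}}$ is the communication matrix of $f^{\xor}_D$ with its rows permuted by $x\mapsto\overline{x}$, and a row permutation changes neither the rank nor the sign-rank.

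There is no genuine obstacle here --- the lemma is a bookkeeping fact --- but two small points warrant care: one must complement \emph{one} of the inputs, not both (doing it to both leaves $|x\xor y|$ untouched, since $\overline{x}\xor\overline{y}=x\xor y$), and to obtain the stated \emph{equality} rather than merely equality up to constants it is cleanest to run the protocol argument directly, where complementing an input is a genuinely free local operation, instead of passing through the $O(1)$-lossy sign-rank identity.
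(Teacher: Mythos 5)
Your proof is correct and matches the paper's argument exactly: both use the observation that Alice can locally complement her input so that $|\overline{x}\xor y|=n-|x\xor y|$, giving a cost-preserving reduction in each direction. Your added remarks (complement only one input; the sign-rank route loses an $O(1)$ and so only gives equality up to constants) are accurate but not needed beyond what the paper already does.
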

\begin{proof}
Let $\pi$ be a communication protocol in the unbounded-error model for the function $f^{\xor}_D$. Consider the new protocol in which Alice first negates her input by replacing all her $0$'s with $1$'s and all her $1$'s with $0$'s to obtain a new vector $x'$. Then the two players proceed by running $\pi$ on $x'$ and $y$. Since $|x'\xor y|=n-|x\xor y|$, the new protocol is a valid protocol for $f^{\xor}_{\overleftarrow{D}}$, and thus $U(f^{\xor}_{\overleftarrow{D}})\leq U(f^{\xor}_D)$. Reversing $\overleftarrow{D}$, we obtain $U(f^{\xor}_D)\leq U(f^{\xor}_{\overleftarrow{D}})$.
\end{proof}

To avoid rounding issues, we assume that $n$ is a sufficiently large power of two. Indeed any communication protocol that computes $D$ can clearly compute the restriction of $D$ to a given subinterval. Hence in the case where $n$ is not a power of two,  we can restrict to a subinterval whose length is a power of two while ensuring that the $\deg_2$ of the new problem is at least $M/2$. 

Hence, below, $n$ is assumed to be a sufficiently large power of two, and $M=\deg_2(D)$.  Set  $q=\frac{n}{2^5}$ and $r=\frac{n-q}{2}$. By Lemma~\ref{lem:flip}, replacing $D$ with $\overleftarrow{D}$ if necessary, we can assume, without loss of generality, that there are at least $M/2$ indices $0 \le i < n/2$ for which  $D(i) \neq D(i+2)$. Out of those,  either more than half are odd, or at least half are even. Thus by restricting to an interval of length $q$, we conclude that there exists an integer $s \in \{0,q,2q,\ldots,15 q\}$ such that either there are at least $\frac{M}{2^7}$ even values of $i \in [s,s+q)$ for which  $D(i) \neq D(i+2)$ or there are at least $\frac{M}{2^7}$ such odd values. If it is the former case, set  $t=r$, and otherwise set $t=r+1$.

 Define $G:\{0,\ldots,q/2\} \to \{-1,1\}$ as 
$$G(i) \defeq D(r+t-(n-2q-s)-2i)
=
\left\{
\begin{array}{lcl}
 D(s+q-2i) &\qquad &r=t \\
 D(s+q-2i+1) &\qquad &r=t+1
\end{array}\right.
.$$ 
By the above discussion  $K_G=\deg(G) \ge \frac{M}{2^7}$. Thus, by Theorem~\ref{thm:sherstov}, $$U(f^{\wedge}_G)\geq\Theta\Big(\frac{K_G}{\log^5q}\Big)=\Theta\Big(\frac{M}{\log^5n}\Big).$$ 

Let $k=\frac{n-2q-s}{2}$, and note  $G(i)=D(r+t-2(k+i))$. To finish the proof we will embed $f_G^\wedge$ in $f_D^\xor$. Alice and Bob receive two inputs $x',y'\in\{0,1\}^{q/2}$ for the $f_G^\wedge$  function. They create  inputs $x,y \in \{0,1\}^n$ for $f_D^\xor$ satisfying $|x|=r$ and $|y|=t$ in the following manner. Alice creates $x$ by adding  $r-|x'|$ number of $1$'s to $x'$ and $n-\frac{q}{2}-(r-|x'|)$ number of $0$'s and Bob creates $y$ by adding $t-|y'|$ number of $1$'s and $n-\frac{q}{2}-(t-|y'|)$ number of $0$'s. Moreover, Alice and Bob ensure that the added $1$'s match in $k$ positions. This is always possible as
$$k+\frac{q}{2}\leq r  \qquad \mbox{and}\qquad r+t-k\leq n-\frac{q}{2}.$$

Note that $|x\wedge y|=k+|x'\wedge y'|$ from our construction, and $|x\xor y|=r+t-2|x\wedge y|$ holds whenever $|x|=r$ and $|y|=t$. So $|x\xor y|=r+t-2(k+|x'\wedge y'|)$, and $G(|x'\wedge y'|)=D(r+t-2(k+|x'\wedge y'|))=D(|x\xor y|)$.  Hence, 
$$U(f^{\xor}_D)\geq U(f^{\wedge}_G)\geq\Theta\Big(\frac{M}{\log^5n}\Big).$$

\subsection{Proof of Theorem~\ref{thm:main}, upper bound}
As we mentioned earlier, the upper bound is straightforward. Recall that the Fourier expansion of $p:\{0,1\}^n \to \mathbb{R}$ is the unique expansion $p(x)=\sum_{S \subseteq [n]} \widehat{p}(S) \chi_S(x)$, where  $\chi_{S}(x)\defeq (-1)^{\sum_{i \in S} x_i}$ are the Fourier characters. The real numbers $\widehat{p}(S)$ are called Fourier coefficients.

Consider a function $p:\{0,1\}^n \to \mathbb{R}$, and define $p^\oplus:\{0,1\}^n \times \{0,1\}^n \to \mathbb{R}$  as $p^\oplus(x,y)\defeq p(x \oplus y)$.  It is well-known and easy to see that the rank of the $2^n \times 2^n$ matrix with entries $p^\oplus(x,y)$  is equal to the number of non-zero Fourier coefficients of $p$. Let $D:\{0,1,\dots,n\}\to\{-1,1\}$ be a given predicate, and let $M=\deg_2(D)$. We will prove the upper bound by constructing a function $p:\{0,1\}^n \to \{-1,1\}$ with at most $4n^M$ non-zero Fourier coefficients such that  $p(x) D(\sum x_i)>0$ for all $x \in \{0,1\}^n$. Then $p^\oplus$  sign-represents $f^\oplus_D$ and yields the bound $\rk_\pm(f^\oplus_D) \le 4 n^M$ implying the desired upper bound .

Consider the two functions $A,B:\{0,\ldots,\lfloor n/2 \rfloor \} \to \{-1,1\}$ defined as $A(y) \defeq D(2y)$ and $B(y) \defeq D(2y+1)$, where we set $D(n+1)\defeq D(n-1)$ in the case of $2y+1=n+1$.   Obviously $\deg(A),\deg(B) \le M$, and thus there are univariate polynomials $q_1(y)$ and $q_2(y)$ such that $\deg(q_1),\deg(q_2) \le M$ and $q_1(y)A(y)>0$ and $q_2(y)B(y)>0$ for all $y \in \{0,\ldots,\lfloor n/2 \rfloor \}$. 
 
Let
$$p(x) =\frac{1 + \chi_{[n]}(x)}{2} q_1\left(\frac{\sum x_i }{2}\right) +\frac{1 - \chi_{[n]}(x)}{2} q_2\left(\frac{(\sum x_i)-1}{2} \right),$$
and note that if $\sum x_i$ is even, then $p(x) = q_1(y)$ where $2y=\sum x_i$, and  if $\sum x_i$ is odd, then $p(x) = q_2(y)$ where $2y+1=\sum x_i$. In  both cases $p(x)$ has the same sign as $D(\sum x_i)$. On the other hand since $q_1\left(\frac{\sum x_i }{2}\right)$ and $q_2\left(\frac{(\sum x_i)-1}{2} \right)$ are both polynomials of degree at most $M$, their Fourier expansion is supported on sets $S$ of size at most $M$. Multiplying by  $\frac{1 + \chi_{[n]}(x)}{2}$ and $\frac{1 - \chi_{[n]}(x)}{2}$ at most doubles the support of the Fourier expansion. Hence, for sufficiently large $n$, the Fourier expansion of $p$ is supported on at most $4 \sum_{k=0}^M {\lfloor n/2 \rfloor +1  \choose k} \le 4 n^M$ terms as desired.

\bibliographystyle{alpha}
\bibliography{XOR}

\begin{thebibliography}{AFK17}

\bibitem[AFK17]{AOR17}
Anil Ada, Omar Fawzi, and Raghav Kulkarni.
\newblock On the spectral properties of symmetric functions.
\newblock 2017.
\newblock In preparation.

\bibitem[CM17]{CN17}
Arkadev Chattopadhyay and Nikhil Mande.
\newblock Dual polynomials and communication complexity of xor functions.
\newblock {\em arXiv.org:1704.02537}, 2017.

\bibitem[For02]{MR1964645}
J{\"u}rgen Forster.
\newblock A linear lower bound on the unbounded error probabilistic
  communication complexity.
\newblock {\em J. Comput. System Sci.}, 65(4):612--625, 2002.
\newblock Special issue on complexity, 2001 (Chicago, IL).

\bibitem[PS86]{MR864082}
Ramamohan Paturi and Janos Simon.
\newblock Probabilistic communication complexity.
\newblock {\em J. Comput. System Sci.}, 33(1):106--123, 1986.
\newblock Twenty-fifth annual symposium on foundations of computer science
  (Singer Island, Fla., 1984).

\bibitem[Raz03]{MR1957920}
A.~A. Razborov.
\newblock On the quantum communication complexity of symmetric predicates.
\newblock {\em Izv. Ross. Akad. Nauk Ser. Mat.}, 67(1):159--176, 2003.

\bibitem[RS10]{MR2592035}
Alexander~A. Razborov and Alexander~A. Sherstov.
\newblock The sign-rank of {$\rm AC^0$}.
\newblock {\em SIAM J. Comput.}, 39(5):1833--1855, 2010.

\bibitem[She08]{MR2582888}
Alexander~A. Sherstov.
\newblock The pattern matrix method for lower bounds on quantum communication.
\newblock In {\em S{TOC}'08}, pages 85--94. ACM, New York, 2008.

\bibitem[She11]{MR2886100}
Alexander~A. Sherstov.
\newblock The unbounded-error communication complexity of symmetric functions.
\newblock {\em Combinatorica}, 31(5):583--614, 2011.

\bibitem[Yao79]{Yao:1979}
Andrew Chi-Chih Yao.
\newblock Some complexity questions related to distributive
  computing(preliminary report).
\newblock In {\em Proceedings of the Eleventh Annual ACM Symposium on Theory of
  Computing}, STOC '79, pages 209--213, New York, NY, USA, 1979. ACM.

\bibitem[ZS09]{MR2553112}
Zhiqiang Zhang and Yaoyun Shi.
\newblock Communication complexities of symmetric {XOR} functions.
\newblock {\em Quantum Inf. Comput.}, 9(3-4):255--263, 2009.

\end{thebibliography}

\end{document}